\numberwithin{equation}{section}
\theoremstyle{definition}
\newtheorem{theorem}{Theorem}[section]
\newtheorem{proposition}[theorem]{Proposition}
\newtheorem{definition}[theorem]{Definition}
\newtheorem{example}[theorem]{Example}
\newtheorem{notation}[theorem]{Notation}
\newtheorem{remark}[theorem]{Remark}
\newtheorem{lemma}[theorem]{Lemma}
\newcommand{\F}{\mathbb{F}}
\newcommand{\R}{\mathbb{R}}
\newcommand{\mN}{\mathcal{N}}
\newcommand{\alp}{\mA}
\newcommand{\ad}{\boldsymbol{A}}
\newcommand{\U}{\mathcal{U}}
\newcommand{\mF}{\mathcal{F}}
\newcommand{\mV}{\mathcal{V}}
\newcommand{\mE}{\mathcal{E}}
\newcommand{\C}{\textup{C}}
\newcommand{\Fq}{\F_q}
\newcommand{\bT}{\boldsymbol{T}}
\newcommand{\mU}{\mathcal{U}}
\newcommand{\mA}{\mathcal{A}}
\newcommand{\mD}{\mathcal{D}}
\newcommand{\mS}{\mathcal{S}}
\newcommand{\h}{\textup{H}}
\newcommand{\concat}{\RHD} 
\newlength{\mynodespace}
\definecolor{myg}{RGB}{220,220,220}
\pgfplotsset{compat=1.18}
\def\BibTeX{{\rm B\kern-.05em{\sc i\kern-.025em b}\kern-.08em
    T\kern-.1667em\lower.7ex\hbox{E}\kern-.125emX}}
\begin{document}

\title{Multishot Adversarial Network Decoding\\
\thanks{The work of G. C. and G. L. M. is supported by NSF DMS-2201075. G. L. M. is also supported in part by the Commonwealth Cyber
Initiative. A. R. is supported by the Dutch Research Council through grants VI.Vidi.203.045, OCENW.KLEIN.539, 
and by the Royal Academy of Arts and Sciences of the Netherlands.}
}

\author{\IEEEauthorblockN{Giuseppe Cotardo, Gretchen L. Matthews, Julia Shapiro}
\IEEEauthorblockA{\textit{Dept. of Mathematics} \\
\textit{Virginia Tech}\\
Blacksburg, VA, USA \\
gcotardo@vt.edu, gmatthews@vt.edu, juliams22@vt.edu}
\and
\IEEEauthorblockN{Alberto Ravagnani}
\IEEEauthorblockA{\textit{Dept. of Mathematics and Computer Science} \\
\textit{Eindhoven University of Technology}\\
Eindhoven, the Netherlands  \\
a.ravagnani@tue.nl}
}

\maketitle

\begin{abstract}
We investigate adversarial network coding and decoding focusing on the multishot regime.
Errors can occur on a proper subset of the network edges and are modeled via an adversarial channel. 
The paper contains both bounds and capacity-achieving schemes
for the Diamond Network and the Mirrored Diamond Network. We also initiate the study of the generalizations of these networks.

\end{abstract}

\begin{IEEEkeywords}
network decoding, adversarial network, multishot capacity
\end{IEEEkeywords}

\section{Introduction}

Network coding is a communication strategy that outperforms routing \cite{yueng2006upper, linear2003network, network2000flow, ad2005network, medard2003, wang2007broadcast,byzantine2007,random2008network,nutman2008,kotter2008,jafari2009multi,kschischang2019multi,multi2009bound}. 
In that context, sources attempt to transmit information packets to one or more terminals through a network of intermediate nodes.
In~\cite{ravagnani2018}, the authors introduced the problem of computing the capacity of a network where errors 
can occur on a proper subset of the network edges. This scenario was modeled via an adversarial channel and a generalized Singleton Bound for the one-shot capacity (i.e., the largest amount of information packets that can be sent to all terminals in a single use of the network) of networks with restricted adversaries was established.

This study was furthered in~\cite{beemer2021curious, BEEMER202236,beemer2023network}, where 
the authors introduced the concept of \textit{network decoding} as a necessary strategy to achieve capacity in networks with a restricted adversary. In~\cite{beemer2023network}
five elementary families of
networks were introduced, together with upper bounds on their one-shot capacity and communication schemes.
The Diamond Network and the Mirrored Diamond Network are 
examples from these elementary families.
The results of~\cite{beemer2023network}
demonstrate that when an adversary is confined to operate on a vulnerable region of the network, end-to-end communication strategies combined with linear network coding are sub-optimal.
 
In this paper, we initiate the study of the \textit{multishot capacity} of networks with restricted adversaries. In other words, we wish to compute the largest amount of information packets that can be sent to all terminals on average over multiple uses of the network. We focus on the Diamond Network and the Mirrored Diamond Network as building blocks of a more general theory that will be presented in an extended version of this work.

Our work gives insight into when using a network multiple times for communication allows us to send more information than in one use of a network with a large alphabet. We show that the multishot capacity of the Diamond Network has a strict increase in comparison to its one-shot capacity in one adversarial model. On the other hand, for the Mirrored Diamond Network and generalizations, the maximum capacity is the same over multiple uses of these networks. 

This paper is organized as follows. In Section~\ref{sec:netcod} we introduce the notation and background to be used throughout the paper. Section~\ref{sec:regI} establishes the exact multishot capacity of the Diamond Network, the Mirrored Diamond Network and the families of networks introduced in \cite{BEEMER202236, beemer2023network}, under the assumption that the adversary cannot change the edges attacked in each transmission round. In Section~\ref{sec:regII}, we study the multishot capacities of these networks assuming that the adversary has freedom to change the edges attacked in each transmission round. In Section~\ref{sec:future}, we include open questions and ideas for future work. 

\section{Network Decoding}
\label{sec:netcod}
This section includes preliminary definitions and results related to adversarial networks. We establish the notation that will be used throughout the paper.

In the following, we let $q$ be a prime power and $\Fq$ the finite field with $q$ elements. We recall the definition of \textbf{network} proposed in~\cite[Definition 38]{ravagnani2018}.  A \textbf{network} is a 4-tuple $\mN = (\mV,\mE,S,\bT)$ where $(\mV,\mE)$ is a directed, acyclic and finite multigraph, $S \in \mV$ is the \textbf{source} and  $\bT \subset \mV$ is the set of \textbf{terminals}. 
We also assume that there exists a directed path from $S$ to any $T \in \textbf{T}$, $|\textbf{T}| \geq 1$ and $S \notin \textbf{T}$ and that for every $V \in \mathcal{V}\setminus (\{S\}\cup \textbf{T}),$ there exists a directed path from $S$ to $V$ and from $V$ to one of the terminals $T \in \textbf{T}$.
The elements of $\mathcal{V}$ are called \textbf{vertices} (or \textbf{nodes}) and the ones in $\mathcal{E}$ are called \textbf{edges}. The refer to the elements in $V \in \mathcal{V}\setminus (\{S\}\cup \textbf{T})$ as \textbf{intermediate vertices} (or \textbf{intermediate nodes}). We define \textbf{indegree} (respectively, \textbf{outdegree}) of a vertex $V\in\mathcal{V}$ to be the the set of edges coming into (respectively, going out of) $V$. We denote it by $\textup{in}(V)$ (respectively, $\textup{out}(V)$) and its cardinality by $\deg^{+}(V)$ (respectively, $\deg^{-}(V)$). 

Each edge of the network $\mN$ carries one element from an \textbf{alphabet} $\mA$. We assume that $\alp$ has cardinality at least $2$. The vertices in the network $\mN$ receive symbols from $\mA$ over the incoming edges, process them according to functions, and then send the outputs over the outgoing edges. We model errors in the transmission as being introduced by an adversary $A$ who can corrupt up to $t$ edges from a fixed subset $\mathcal{U} \subseteq \mathcal{E}$. The adversary can change a symbol sent across one of the edges of $\mathcal{U}$ and change it to any other symbol of the alphabet. We call the pair $(\mN,\textbf{A})$ an \textbf{adversarial network}. 

\begin{definition}[\hspace{-0.1pt}{\cite[Definition 1]{ravagnani2018}}]
 An \textbf{adversarial channel} is a map $\Omega := \mathcal{X} \to 2^{\mathcal{Y}}$, where $\mathcal{X}$ and $\mathcal{Y}$ are the \textbf{input} and \textbf{output} alphabets respectively and $\mathcal{X},\mathcal{Y} \neq \emptyset$. The notation for this type of channel is $\Omega: \mathcal{X} \dashrightarrow \mathcal{Y}$.
 \end{definition}

\begin{definition} 
[\hspace{-0.1pt}{\cite[Definition~40]{ravagnani2018}}]
A \textbf{network code} $\mathcal{F}$ for $\mN$ is a family of functions $\{\mathcal{F}_V: V \in 
  \mathcal{V} \setminus\{S \cup \textbf{T}\}\},$ where $\mathcal{F}_{V}: \mA^{\textup{deg}^{+}(V)} \to \mA^{\textup{deg}^{-}(V)}$ for all $V$.
\end{definition}

The functions in $\mF$ give a description of how $\mN$ processes the information in each node coming from the incoming edges. This can be uniquely interpreted according to a partial ordering. Let $e_1$, $e_2$ be edges in a network. We say $e_1 \in \mE$ \textbf{precedes} $e_2 \in \mE$ ($e_1 \preceq e_2$) if there exists a directed path in $(V,\mE)$, with starting edge $e_1$ and ending edge $e_2$. It is well know that this partial order can be extended to a total order on $\mE$, denoted by $\leq$. Let $\mU,\mU' \in \mE$ be subsets of edges in a network. As in \cite[Definition 3]{BEEMER202236}, we say that~$\mU$ \textbf{precedes} $\mU'$ if every path from $S$ to an edge of $\mU'$ contains an edge of $\mU$.

\begin{definition}[\hspace{-0.1pt}{\cite[Definition 4]{BEEMER202236}}]\label{notC}
Let $(\mN,\ad)$ be an adversarial network 
and let $\U, \U'\subseteq \mE$ be non empty with $\U$ preceding $\U'$. Let $\mF$ be a network code for $\mN$. For $x \in \alp^{|\U|}$, we denote the set of vectors over the alphabet that can be exiting the edges of $\U'$ as 
\[ \Omega[\mN, \ad, \mF, \U \to \U' ](x) \subseteq \alp^{|\U'|}\]
We note that the vertices process the information according to  the choice of $\mF$, the coordinates of $x$ are values from $\alp$ entering the edges of $\U'$ and we consider the total order $\leq$ throughout.
\end{definition}

\begin{definition}[\hspace{-0.1pt}{\cite[Definition 6]{BEEMER202236}}]
An \textbf{outer} \textbf{code} for 
$\mN$ is a subset
$C \subseteq \alp^{\deg^{+}(V)}$ with $|C| \ge 1$. We say that $C$ is  \textbf{unambiguous} (or \textbf{good}) for $(\mN, \ad, \mF)$ if for all $x,y \in C$ with $x \neq y$ and for all $T \in \bT$ we have
\begin{multline*}
    \Omega[\mN, \ad, \mF, \textup{out}(V) \to \textup{in}(V) ](x)\cap \\
     \Omega[\mN, \ad, \mF, \textup{out}(V) \to \textup{in}(V) ](y) =\emptyset.
\end{multline*}
\end{definition}

We note that the intersection being empty guarantees that every element of $C$ can be uniquely recovered by every terminal, despite the actions the adversary takes. Next we define the notion of one-shot capacity of an adversarial network.

\begin{definition}[\hspace{-0.1pt}{\cite[Definition~3.18]{beemer2023network}}] The \textbf{one-shot} capacity of an adversarial network $(\mathcal{N},\ad)$ is the maximum $\alpha \in \R$ such that there exists an unambiguous code $C$ and a network code $\mathcal{F}$ with $\alpha = \log_{|\mathcal{A}|}(|C|).$ We denote the maximum quantity by $C_1(\mathcal{N},\ad)$.

\end{definition}

The following result, proved in~\cite{ravagnani2018}, gives an upper bound on the one-shot capacity of a network $\mathcal{N}$ under the presence of an adversary who is restricted to corrupting a proper subset of its edges.

\begin{theorem}[The Singleton Cut-Set Bound - {\hspace{-0.1pt}\cite[Theorem 67]{ravagnani2018}}]
\label{cutset}
Let $t\geq 0$ and suppose that $\ad$ can corrupt up to $t$ edges from a subset $\mathcal{U} \subseteq \mathcal{E}$. Then 
\begin{multline*}
C_1(\mN,\ad) \leq \min_{T \in \textbf{T}}\min_{\mathcal{E}'} \left( |\mE' \setminus \mathcal{U}|\right. \\\left.+\max\{0, |\mE' \cap \mathcal{U}|-2t\} \right)
\end{multline*}
where $\mE' \subseteq \mE$ ranges over all edge-cuts between $S$ and $T$.
\end{theorem}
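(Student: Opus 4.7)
The plan is to reduce the statement to a classical Singleton-type bound on a code living at the chosen cut. Fix a terminal $T \in \bT$, an edge-cut $\mE'$ between $S$ and $T$, an unambiguous outer code $C$ for $(\mN,\ad)$, and a network code $\mF$. First, I would define the map $\varphi\colon C \to \alp^{|\mE'|}$ sending $x \in C$ to the tuple of symbols that appear on the edges of $\mE'$ when $x$ is transmitted and the adversary is inactive. Because $\mE'$ is a cut between $S$ and $T$ and the processing past $\mE'$ is deterministic once $\mF$ is fixed, the input received at $T$ is a function of the vector on $\mE'$; hence unambiguity forces $\varphi$ to be injective.

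Next I would translate unambiguity into a Hamming-distance condition on $\varphi(C)$. The adversary cannot modify coordinates indexed by $\mE' \setminus \mathcal{U}$ and can modify at most $t$ coordinates in $\mE' \cap \mathcal{U}$. By considering the adversarial strategy that attacks only edges in $\mathcal{U} \cap \mE'$, one sees that for distinct $x,y \in C$, either $\varphi(x)$ and $\varphi(y)$ already disagree on some coordinate of $\mE' \setminus \mathcal{U}$, or their Hamming distance on $\mE' \cap \mathcal{U}$ is at least $2t+1$; otherwise a common vector $z$ would lie within Hamming distance $t$ of both on $\mE' \cap \mathcal{U}$, and the two strategies rewriting $\varphi(x)$ or $\varphi(y)$ into $z$ would produce identical outputs at $T$, contradicting the disjointness in Definition~\ref{notC}.

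With this distance condition in hand, I would partition $\varphi(C)$ by its projection onto $\mE' \setminus \mathcal{U}$. Each non-empty fiber is a code in $\alp^{|\mE' \cap \mathcal{U}|}$ with minimum Hamming distance at least $2t+1$, so the classical Singleton bound gives fiber size at most $|\alp|^{\max\{0,\,|\mE' \cap \mathcal{U}| - 2t\}}$. Summing over the at most $|\alp|^{|\mE' \setminus \mathcal{U}|}$ fibers and using injectivity of $\varphi$ yields
\[
|C| \le |\alp|^{|\mE' \setminus \mathcal{U}| + \max\{0,\,|\mE' \cap \mathcal{U}| - 2t\}}.
\]
Taking $\log_{|\alp|}$ and minimizing over $T$ and $\mE'$ delivers the bound.

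The main obstacle is the second step: one must verify that the adversary genuinely realizes every nearby cut-vector, so that the decoding-radius constraint on $\varphi(C)$ is actually forced by unambiguity. This hinges on the adversary's freedom to pick which $\le t$ edges of $\mathcal{U}$ to corrupt and to choose the replacement symbols arbitrarily, together with the cut property ensuring that $T$'s output depends only on the symbols crossing $\mE'$. Once this is carefully formalized through the sets $\Omega[\mN,\ad,\mF,\mU\to\mU'](x)$, the remainder of the argument is routine bookkeeping.
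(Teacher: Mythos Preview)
The paper does not supply its own proof of this statement; Theorem~\ref{cutset} is simply quoted from \cite[Theorem~67]{ravagnani2018} as background, so there is no in-paper argument to compare your attempt against.

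Your outline is the standard route to this kind of cut-set bound and is essentially how the result is established in the cited reference: push the code to the cut via the no-adversary evaluation map $\varphi$, turn unambiguity into a Hamming-radius constraint on the $\mE'\cap\mU$ coordinates, and finish with a fiberwise Singleton count over the $\mE'\setminus\mU$ projection. The only place that genuinely needs more than you have written is precisely the step you flag as ``the main obstacle.'' If $\mE'$ is not an antichain with respect to $\preceq$ --- say $e\in\mE'\cap\mU$ precedes some $e'\in\mE'$ --- then corrupting $e$ may alter the symbol subsequently computed on $e'$, so the naive ``overwrite the $\le t$ differing coordinates of $\mE'\cap\mU$'' move does not in general produce the intended target vector on $\mE'$, and your dichotomy in step~2 is not yet justified. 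The clean fix is to argue first that it suffices to prove the inequality for antichain cuts (every edge-cut in the DAG is refined by one with no larger value of the right-hand side), after which your overwrite argument is literally correct and the rest of the sketch --- injectivity of $\varphi$, fiberwise Singleton, summing over fibers --- goes through verbatim.
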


This work focuses on the \textbf{multishot capacity} of an adversarial network $(\mN,\ad)$. A formal definition, which extends~\cite[Definition~3.18]{beemer2023network}, is the following.
\begin{definition}
    Let $i$ be a positive integer. The \textbf{i-th shot capacity} of $(\mN,\ad)$ is the maximum $\alpha\in\R$ such that there exists an unambiguous code $C$ for $(\mN,\ad)$ with $\alpha=\frac{\log_{|\mA|}(|C|)}{i}$. We denote this maximum value by $\C_i(\mN,\ad)$.
\end{definition} 

Using the network multiple times can also be modeled as the \textit{power channel} associated to it. We can formalize it as follows.

\begin{definition}[\hspace{-0.1pt}{\cite[Definition~10]{ravagnani2018}}]
\label{power}Let $i \ge 1$ be an integer. The \textbf{i-th power} of a channel 
$\Omega : \mathcal{X} \dashrightarrow \mathcal{Y}$ is represented by the channel
$$\Omega^i := \underbrace{\Omega \times \cdots \times 
\Omega}_{i \text{ times}} :
\mathcal{X}^i \dashrightarrow \mathcal{Y}^i.$$
\end{definition} 

We also recall the definitions the operations of \textit{product} and \textit{concatenation}  of channels introduced in~\cite{ravagnani2018}. See~\cite[Definitions~7 and~14]{ravagnani2018} respectively.

Let $\Omega_1:\mathcal{X}_1 \dashrightarrow \mathcal{Y}_1$ and $\Omega_2:\mathcal{X}_2 \dashrightarrow \mathcal{Y}_2$
be channels and assume that $\mathcal{Y}_1 \subseteq \mathcal{X}_2$. The \textbf{product} of $\Omega_1$ and $\Omega_2$
is the channel $\Omega_1 \times \Omega_2 : \mathcal{X}_1 \times \mathcal{X}_2 \dashrightarrow  \mathcal{Y}_1 \times \mathcal{Y}_2$
defined by   
$$(\Omega_1 \times \Omega_2)(x_1,x_2):= \Omega_1(x_1) \times 
\Omega_2(x_2),$$ $\mbox{for all 
$(x_1,x_2) \in \mathcal{X}_1 \times \mathcal{X}_2$}$. The \textbf{concatenation} of
$\Omega_1$ and $\Omega_2$ is the channel
$\Omega_1 \blacktriangleright \Omega_2 : \mathcal{X}_1 \dashrightarrow \mathcal{Y}_2$ defined by
$$(\Omega_1 \blacktriangleright \Omega_2)(x):= \bigcup_{y \in \Omega_1(x)} \Omega_2(y).$$

The following proposition provides a lower bound on the one-shot capacity of product channels and establishes a connection between the one-shot capacity and the $i-$th shot capacity of a network.

\begin{proposition}[\hspace{-0.1pt}{\cite[Proposition~12]{ravagnani2018}}]
\label{lwpord}
For a channel $\Omega: \mathcal{X} \dashrightarrow \mathcal{Y}$ and any $i\geq 1$, we have
$$C_1(\Omega^i) \ge i \cdot C_1(\Omega).$$ 
\end{proposition}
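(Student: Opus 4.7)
The plan is to exhibit an explicit unambiguous code for $\Omega^i$ of size $|C|^i$, where $C$ is a capacity-achieving unambiguous code for $\Omega$. The inequality then follows after applying $\log_{|\mathcal{A}|}$ (noting that for the $i$-th power channel, $\log_{|\mathcal{A}|}(|C|^i)=i\log_{|\mathcal{A}|}(|C|)$).

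Concretely, I would first fix an unambiguous code $C\subseteq\mathcal{X}$ for $\Omega$ achieving $C_1(\Omega)=\log_{|\mathcal{A}|}(|C|)$, whose existence is guaranteed by the definition of one-shot capacity. I would then set
\[
C^{(i)}\;:=\;\underbrace{C\times\cdots\times C}_{i\text{ times}}\;\subseteq\;\mathcal{X}^i,
\]
which is the natural candidate code for the channel $\Omega^i:\mathcal{X}^i\dashrightarrow\mathcal{Y}^i$.

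The bulk of the proof is to verify that $C^{(i)}$ is unambiguous for $\Omega^i$. Take distinct $x=(x_1,\ldots,x_i)$ and $y=(y_1,\ldots,y_i)$ in $C^{(i)}$, and pick an index $j$ with $x_j\neq y_j$. Since $C$ is unambiguous for $\Omega$, we have $\Omega(x_j)\cap\Omega(y_j)=\emptyset$. Unwinding Definition~\ref{power} gives
\[
\Omega^i(x)\cap\Omega^i(y)=\prod_{k=1}^{i}\bigl(\Omega(x_k)\cap\Omega(y_k)\bigr),
\]
and the $j$-th factor is empty, so the whole product is empty. Hence $C^{(i)}$ is unambiguous for $\Omega^i$, and $|C^{(i)}|=|C|^i$ yields $C_1(\Omega^i)\geq \log_{|\mathcal{A}|}(|C|^i)=i\cdot C_1(\Omega)$.

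I do not anticipate a real obstacle: the argument is purely combinatorial, with the only subtlety being the bookkeeping of the base of the logarithm (the alphabet underlying $\Omega^i$ is naturally $\mathcal{A}$, not $\mathcal{A}^i$, so the factor of $i$ is preserved rather than absorbed into the base change). The conceptual content is simply that independent uses of a disambiguating code remain disambiguating, which in turn relies on the product structure built into the definition of $\Omega^i$.
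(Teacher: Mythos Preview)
The paper does not supply its own proof of this proposition; it is quoted verbatim as \cite[Proposition~12]{ravagnani2018} and left unproved. Your argument is the standard one and is correct: the Cartesian power $C^{(i)}$ of an unambiguous code $C$ is unambiguous for $\Omega^i$ because $\Omega^i(x)\cap\Omega^i(y)=\prod_{k}\bigl(\Omega(x_k)\cap\Omega(y_k)\bigr)$ vanishes as soon as one coordinate does, and taking $\log_{|\mathcal{A}|}$ of $|C|^i$ yields the factor~$i$.
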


We restrict to networks with a single source $S$ and we follow the notation introduced in~\cite[Section~V]{beemer2023network}. A network $\mN = (\mV,\mE,S,\bT)$ is \textbf{simple} if it has only one terminal $T$, i.e. $\bT=\{T\}$. We say that $\mN$ is a \textbf{simple two-level network} if every path from $S$ to $T$ has length $2$. 
Let $\mN$ be a simple two-level network with $j\geq 1$ intermediate nodes. In the sequel we follow~\cite[Notations~5.3 and Example~5.4]{beemer2023network}, and refer the reader to that paper for the details. Let $[x_1,\ldots,x_j]$ and $[y_1,\ldots,y_j]^\top$ be the matrix representation of the graph induced by the source, the intermediate nodes, and the terminal respectively. We denote $\mN$ by $([x_1,\ldots,x_j],[y_1,\ldots,y_j])$.

In the remainder of the paper, we will focus on particular families of networks introduced in~\cite{beemer2021curious}, \cite{BEEMER202236} and~\cite{beemer2023network}. We recall their definitions.

Let $\mD$ be the network in Figure~\ref{diamond} and consider an adversary $\ad_{\mD}$ able to corrupt at most one of the dashed edges. We call the pair $(\mD,\ad_{\mD})$ the \textbf{Diamond Network}. It was shown in~\cite[Section~III]{BEEMER202236} and \cite{beemer2021curious} that this is the smallest example of a network that does not meet the Singleton Cut-Set Bound proved in \cite{ravagnani2018}. In particular, the following holds.
\begin{theorem}[\hspace{-0.1pt}{\cite[Theorem 13]{BEEMER202236}}] For any alphabet $\mA$, we have $ C_1(\mathcal{D}, \ad_{\mathcal{D}}) = \textup{log}_{|\alp|}(|\alp|-1).$
\end{theorem}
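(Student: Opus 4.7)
The plan is to establish the equality by proving both inequalities separately, where essentially all of the difficulty lies in the upper bound.

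For the lower bound $C_1(\mD, \ad_{\mD}) \ge \log_{|\alp|}(|\alp|-1)$, I would construct an explicit unambiguous outer code $C$ with $|C| = |\alp|-1$ together with a network code $\mF$. Fix a distinguished element $a^\ast \in \alp$ and put $C := \alp \setminus \{a^\ast\}$, identifying each element of $C$ with a specific assignment of source-symbols to the edges in $\textup{out}(S)$. The network code $\mF$ should be designed so that, in the absence of attacks, the noiseless observation at the terminal $T$ uniquely determines the message, and so that any single-edge corruption by $\ad_{\mD}$ produces an observation at $T$ that either is inconsistent with every codeword (hence flagged as an attack) or is compatible only with the true message in $\alp \setminus \{a^\ast\}$. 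The exclusion of $a^\ast$ is used precisely to break the potential ambiguity that would arise if the adversary replaced a symbol on a dashed edge by a value coinciding with an alternative codeword. Verification reduces to a finite case analysis over which dashed edge is attacked and to what value.

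For the upper bound $C_1(\mD, \ad_{\mD}) \le \log_{|\alp|}(|\alp|-1)$, I would fix an arbitrary network code $\mF$ together with an unambiguous outer code $C$ for $(\mD, \ad_{\mD}, \mF)$, and for each $x \in C$ define
\[
\Omega_x := \Omega[\mD, \ad_{\mD}, \mF, \textup{out}(S) \to \textup{in}(T)](x) \subseteq \alp^{\deg^+(T)}.
\]
Unambiguity forces the sets $\Omega_x$, for $x \in C$, to be pairwise disjoint. The key structural observation is that whenever the adversary attacks a dashed edge, the symbol on that edge can be set to any of $|\alp|$ values, so $\Omega_x$ contains an entire axis-parallel line through the noiseless observation point for each dashed edge entering $T$. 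Combining pairwise disjointness of the $\Omega_x$ with a pigeonhole argument along the coordinate corresponding to such a dashed edge forces the noiseless observations, as $x$ varies, to occupy at most $|\alp|-1$ distinct values in the complementary coordinates, giving $|C| \le |\alp|-1$.

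The main obstacle is the upper bound: it must hold uniformly over \emph{every} choice of network code $\mF$, including arbitrary nonlinear processing at the two intermediate nodes, and it cannot be obtained from Theorem~\ref{cutset}, which is known to be slack for this network. Carrying out the pigeonhole argument sketched above therefore requires tightly coupling the adversary's single-edge corruption power with the specific topology of $\mD$ and exhibiting, intrinsically from $\mF$, a coordinate of $\textup{in}(T)$ along which the attack fills an entire $|\alp|$-orbit, so that the forced disjointness of these orbits caps $|C|$ at $|\alp|-1$.
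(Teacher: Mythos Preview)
Your lower bound sketch is fine and matches the construction in~\cite[Proposition~11]{BEEMER202236}: reserve a symbol~$\star$, take $C=\alp\setminus\{\star\}$, send the message over all three outgoing edges, and use $\star$ at $V_2$ as an error flag.

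The upper bound argument, however, has a genuine gap rooted in a misreading of the topology. In $\mD$ there are \emph{no} dashed edges entering $T$: the adversary acts on $\{e_1,e_2,e_3\}=\textup{out}(S)$, and the edges $e_4,e_5$ into $T$ are reliable. The terminal therefore observes $\bigl(\mF_{V_1}(x_1),\,\mF_{V_2}(x_2,x_3)\bigr)$, and corrupting $e_1$ to an arbitrary value $a$ produces the first coordinate $\mF_{V_1}(a)$, not $a$ itself. Since $\mF_{V_1}$ and $\mF_{V_2}$ are completely arbitrary (possibly constant), $\Omega_x$ need not contain any axis-parallel line, and your pigeonhole step collapses. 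Even granting a full line in one coordinate, disjointness of $|\alp|$-lines inside $\alp^2$ only yields $|C|\le|\alp|$, not $|\alp|-1$; the extra unit must come from somewhere.

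The proof in~\cite{BEEMER202236} (whose structure this paper replicates in the multishot lemmas generalizing Claims~A, B, C) proceeds differently. One first shows that unambiguity forces the projection onto the $e_1$-coordinate to be injective on $C$ and that $\mF_{V_1}$ must be injective on that image; otherwise the adversary, acting on $e_2$ or $e_3$, can confuse two codewords. Next, one shows that for all but at most one $x\in C$ the set $\Omega(x)$ restricted to the $e_5$-coordinate has size at least~$2$ (if two codewords each had a singleton there, the adversary on $e_1$ could confuse them). Summing $|\Omega(x)|$ over $x\in C$ and combining these facts with disjointness inside $\alp^2$ yields $|C|^2+|C|-1\le|\alp|^2$, hence $|C|\le|\alp|-1$. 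The missing idea in your sketch is precisely this interplay: injectivity of the intermediate functions on the code is not assumed but \emph{forced} by unambiguity, and the ``$-1$'' comes from Claim~C, not from a direct orbit count.
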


\begin{figure}[ht]
\centering
\begin{tikzpicture}
\tikzset{vertex/.style = {shape=circle,draw,inner sep=0pt,minimum size=2.0em}}
\tikzset{nnode/.style = {shape=circle,fill=myg,draw,inner sep=0pt,minimum
size=2.0em}}
\tikzset{edge/.style = {->,> = stealth}}
\tikzset{ddedge/.style = {dashed,->,> = stealth}}

\node[vertex] (S) {$S$};

\node[shape=coordinate,right=\mynodespace of S] (L) {};

\node[nnode,above=0.5\mynodespace of L] (V1) {$V_1$};

\node[nnode,below=0.5\mynodespace of L] (V2) {$V_2$};

\node[vertex,right=\mynodespace of L] (T) {$T$};

\draw[ddedge,bend left=0] (S)  to node[sloped,fill=white, inner sep=1pt]{\small $e_1$} (V1);

\draw[ddedge,bend left=16] (S) to  node[sloped,fill=white, inner sep=1pt]{\small $e_2$} (V2);

\draw[ddedge,bend right=16] (S)  to node[sloped,fill=white, inner sep=1pt]{\small $e_3$} (V2);

\draw[edge,bend left=0] (V1)  to node[sloped,fill=white, inner sep=1pt]{\small $e_4$} (T);

\draw[edge,bend left=0] (V2)  to node[sloped,fill=white, inner sep=1pt]{\small $e_{5}$} (T);

\end{tikzpicture} 
\caption{\label{diamond}{{The Diamond Network $\mD$}}}
\end{figure}

In~\cite[Section~3 and 4]{BEEMER202236}, it was proved that the network obtained by adding an edge to the Diamond Network, as in Figure~\ref{mirrored}, attains the Singleton Cut-Set Bound. The pair $(\mS,\ad_{\mS})$ is called \textbf{Mirrored Diamond Network}, where $\ad_{\mS}$  again represents an adversary able to corrupt at least on dashed edge.

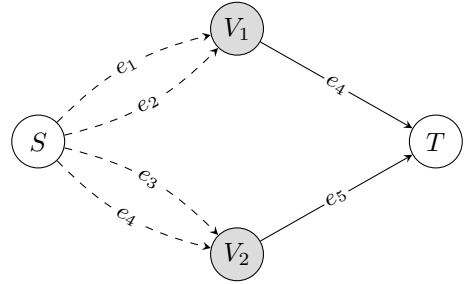
\begin{figure}[ht]
\centering
\begin{tikzpicture}
\tikzset{vertex/.style = {shape=circle,draw,inner sep=0pt,minimum size=2.0em}}
\tikzset{nnode/.style = {shape=circle,fill=myg,draw,inner sep=0pt,minimum
size=2.0em}}
\tikzset{edge/.style = {->,> = stealth}}
\tikzset{ddedge/.style = {dashed,->,> = stealth}}

\node[vertex] (S) {$S$};

\node[shape=coordinate,right=\mynodespace of S] (L) {};

\node[nnode,above=0.5\mynodespace of L] (V1) {$V_1$};

\node[nnode,below=0.5\mynodespace of L] (V2) {$V_2$};

\node[vertex,right=\mynodespace of L] (T) {$T$};

\draw[ddedge,bend left=16] (S)  to node[sloped,fill=white, inner sep=1pt]{\small $e_1$} (V1);

\draw[ddedge,bend right=16] (S) to  node[sloped,fill=white, inner sep=1pt]{\small $e_2$} (V1);

\draw[ddedge,bend left=16] (S)  to node[sloped,fill=white, inner sep=1pt]{\small $e_3$} (V2);

\draw[ddedge,bend right=16] (S)  to node[sloped,fill=white, inner sep=1pt]{\small $e_4$} (V2);

\draw[edge,bend left=0] (V1)  to node[sloped,fill=white, inner sep=1pt]{\small $e_4$} (T);

\draw[edge,bend left=0] (V2)  to node[sloped,fill=white, inner sep=1pt]{\small $e_{5}$} (T);

\end{tikzpicture} 
\caption{{{\label{mirrored} The Mirrored Diamond Network $\mS$}}}
\end{figure}

\begin{theorem}[\hspace{-0.1pt}{\cite[Proposition~14]{BEEMER202236}}] For any alphabet $\mA$, we have $C_1(\mathcal{S}, \ad_{\mathcal{S}}) = 1$.
\end{theorem}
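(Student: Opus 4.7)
The plan is to prove matching upper and lower bounds of $1$.

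For the upper bound I would apply the Singleton Cut-Set Bound of Theorem~\ref{cutset}. I choose the edge-cut $\mE'$ consisting of the single edge from $V_1$ to $T$ together with the two dashed edges from $S$ to $V_2$. This is a valid $S$-$T$ edge-cut, since every directed path from $S$ to $T$ passes through $V_1$ or $V_2$ and thus uses at least one edge in $\mE'$. Exactly one edge of $\mE'$ (the one from $V_1$ to $T$) lies outside $\mU$, while the two remaining edges belong to $\mU$. Because $\ad_{\mS}$ may corrupt $t=1$ vulnerable edge, Theorem~\ref{cutset} evaluates to
\[
C_1(\mS, \ad_{\mS}) \le 1 + \max\{0, 2-2\} = 1.
\]

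For the matching lower bound I would exhibit an unambiguous outer code of cardinality $|\alp|$ together with a network code. Fix any distinguished symbol $\sigma \in \alp$ and let $C := \{(x,x,x,x) : x \in \alp\}$. The network code is the natural one: each intermediate node outputs its common input when its two incoming symbols agree, and outputs $\sigma$ otherwise. Since the adversary corrupts at most one of the four edges in $\mU$, at least one of $V_1, V_2$ receives two unaltered copies of the transmitted symbol $x$ and forwards $x$ to $T$, while the other node forwards either $x$ (if the corruption happens to preserve the symbol) or the flag $\sigma$. Consequently, for every $x \in \alp$ the set of pairs the terminal can receive is contained in $\{(x,x),(\sigma,x),(x,\sigma)\}$, and collapses to the singleton $\{(\sigma,\sigma)\}$ when $x = \sigma$.

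To finish I would verify that these sets are pairwise disjoint for distinct $x,y \in \alp$, which reduces to a short case check according to whether $\sigma$ equals $x$ or $y$. The only delicate case is $x = \sigma \ne y$, where one must confirm that $(\sigma,\sigma)$ is not of the form $(y,y)$, $(\sigma,y)$, or $(y,\sigma)$; this is immediate from $y \ne \sigma$. I expect this routine case analysis to be the only real technical step; the upper bound is an immediate application of Theorem~\ref{cutset}, and achievability hinges only on choosing the flag $\sigma$ so that conflicts can be signalled without needing an additional alphabet symbol. Combining the two bounds yields $C_1(\mS, \ad_{\mS}) = \log_{|\alp|}(|\alp|) = 1$.
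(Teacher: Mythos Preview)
Your proof is correct. Note, however, that the paper does not actually prove this statement: it is quoted as \cite[Proposition~14]{BEEMER202236}, and the surrounding text only records that the Mirrored Diamond Network ``attains the Singleton Cut-Set Bound.'' So there is no in-paper proof to compare against.

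That said, your approach is exactly what the paper's remarks point to. For the upper bound you apply Theorem~\ref{cutset} with the cut $\{V_1\!\to\!T,\,e_3,\,e_4\}$, which is the natural choice and yields $1$ immediately; the paper's comment that $\mS$ meets the Singleton Cut-Set Bound confirms this is the intended route. For the lower bound, your repetition code with a reserved ``error flag'' $\sigma$ and the agree-or-flag rule at each intermediate node is precisely the kind of scheme used elsewhere in the paper (cf.\ the proof of Proposition~\ref{prop:lowA1}, which invokes the analogous construction from \cite[Proposition~11]{BEEMER202236}). Your disjointness verification is complete; in fact the case $x=\sigma\neq y$ you single out as delicate is no harder than the generic case, since the fan-out of $\sigma$ is just $\{(\sigma,\sigma)\}$ and none of $(y,y),(\sigma,y),(y,\sigma)$ equals $(\sigma,\sigma)$ when $y\neq\sigma$.
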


The Diamond Networks were further generalized in~\cite[Section~V.C]{beemer2023network}. Let $t\geq 2$ be an integer and define the simple two-level networks $\mathfrak{C}_t = ([t,t+1],[t,t])$ and $\mathfrak{D}_t = ([2t,2t],[1,1])$. It is not hard to check that $\mathfrak{C}_1$ and $\mathfrak{D}_1$ recovers the Diamond and the Mirrored Diamond Network respectively.

We let $i \in \mathbb{N}$ represent the number of uses of each network. The remainder of the paper is organized according to the following adversarial models. 

\begin{enumerate}[label=A.\arabic*]
    \item\label{scenario1} The adversary corrupts the same $t$ edges over the multiple uses of the network.
    \item\label{scenario2} The adversary can change the $t$ edges to corrupt over the multiple uses of the network.
\end{enumerate}

\section{Multishot Regime I}
\label{sec:regI}
In this section, we study the multishot capacity of the Diamond and Mirrored Diamond Network introduced in~\cite{BEEMER202236} along with the families of networks introduced in \cite{beemer2023network}, in Scenario \ref{scenario1}.

\subsection{The Diamond Network}
We assume that $t=1$ in this section. We start by noticing that reusing the strategy previously proposed in \cite[Proposition 11]{BEEMER202236} one can easily show that $C_i(\mD, \ad_{\mD}) \geq  \textup{log}_{|\alp|}(|\alp|-1)$. We let $\Omega[\mathcal{D}, \ad_{\mathcal{D}}, \mF, \textup{out}(S) \to \textup{in}(T)]$ be the channel representing the transfer from $S$ to $T$ of the Diamond Network as in Definition \ref{notC} for the remainder of the paper. The aim of this section is to explicitly compute the multishot capacity of the Diamond Network in Scenario \ref{scenario1}. We provide a construction of an unambiguous code of cardinality $|\alp|^i - 1$ for $(\mD,\ad_\mD,\mF)$ that models $i$ transmission rounds and we show that this is the maximum possible in this setting. In particular, we prove that 
\begin{equation*}
\C_i(\mD,\ad_\mD)=\frac{\log_{|\mA|}(|\mA|^i-1)}{i}
\end{equation*}
We start with the following result. 

\begin{proposition}\label{prop:lowA1}
   Let $\mF$ be a network code for $(\mD,\ad_\mD)$. If $C\subseteq(\alp^{i})^3$ is an unambiguous code for $(\mD,\ad_\mD,\mF)$ then $|C|\geq |A|^i-1$.
\end{proposition}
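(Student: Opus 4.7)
The plan is to explicitly construct a network code $\mF$ for $(\mD, \ad_\mD)$ together with an unambiguous outer code $C \subseteq (\mA^i)^3$ satisfying $|C| = |\mA|^i - 1$, thereby establishing the desired bound. The construction extends the one-shot scheme of \cite[Proposition~11]{BEEMER202236} to the multishot setting by treating each $i$-block carried on an edge as a single super-symbol in the enlarged alphabet $\mA^i$. Concretely, fix a distinguished block $\mathbf{0} \in \mA^i$; set $\mF_{V_1}$ to be the identity on $\mA^i$, and define $\mF_{V_2}(\mathbf{x}_2, \mathbf{x}_3) = \mathbf{x}_2$ when $\mathbf{x}_2 = \mathbf{x}_3$ and $\mF_{V_2}(\mathbf{x}_2, \mathbf{x}_3) = \mathbf{0}$ otherwise. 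Take as outer code $C = \{(\mathbf{a}, \mathbf{a}, \mathbf{a}) : \mathbf{a} \in \mA^i \setminus \{\mathbf{0}\}\}$, so $|C| = |\mA|^i - 1$.

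To verify that $C$ is unambiguous, I would analyse the output set $O(\mathbf{c}_{\mathbf{a}}) := \Omega[\mD, \ad_\mD, \mF, \textup{out}(S) \to \textup{in}(T)](\mathbf{c}_\mathbf{a})$ of each codeword $\mathbf{c}_\mathbf{a} = (\mathbf{a}, \mathbf{a}, \mathbf{a})$ by conditioning on which of the three edges $e_1, e_2, e_3$ the adversary attacks; by Scenario~\ref{scenario1} the attacked edge is fixed throughout the $i$ uses. An attack on $e_1$ leaves $\mathbf{y}_5 = \mathbf{a}$ while allowing $\mathbf{y}_4$ to range over all of $\mA^i$; an attack on $e_2$ or $e_3$ leaves $\mathbf{y}_4 = \mathbf{a}$ and, through the equality-test design of $\mF_{V_2}$, forces $\mathbf{y}_5 \in \{\mathbf{a}, \mathbf{0}\}$. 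Collecting the three cases gives $O(\mathbf{c}_\mathbf{a}) = (\mA^i \times \{\mathbf{a}\}) \cup (\{\mathbf{a}\} \times \{\mathbf{a}, \mathbf{0}\})$.

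For two distinct codewords $\mathbf{c}_\mathbf{a}, \mathbf{c}_\mathbf{b}$ with $\mathbf{a} \neq \mathbf{b}$ in $\mA^i \setminus \{\mathbf{0}\}$, I would then check the four cross-intersections of these two unions and see that each is empty: the horizontal slabs $\mA^i \times \{\mathbf{a}\}$ and $\mA^i \times \{\mathbf{b}\}$ are disjoint because $\mathbf{a} \neq \mathbf{b}$; an element of $(\mA^i \times \{\mathbf{a}\}) \cap (\{\mathbf{b}\} \times \{\mathbf{b}, \mathbf{0}\})$ would force $\mathbf{a} \in \{\mathbf{b}, \mathbf{0}\}$, contradicting $\mathbf{a} \neq \mathbf{b}$ and $\mathbf{a} \neq \mathbf{0}$; the symmetric case is analogous; and the vertical slabs $\{\mathbf{a}\} \times \{\mathbf{a}, \mathbf{0}\}$ and $\{\mathbf{b}\} \times \{\mathbf{b}, \mathbf{0}\}$ are disjoint because $\mathbf{a} \neq \mathbf{b}$. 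The main delicate step, which I expect to require the most care in the writeup, is justifying that $\mF_{V_2}$ may be taken to act jointly on the two $i$-blocks it receives, so that the test $\mathbf{x}_2 = \mathbf{x}_3$ is a block equality rather than a coordinate-wise equality; a naive coordinate-wise adaptation of the one-shot scheme fails to reach $|\mA|^i - 1$, and it is precisely the block-level design together with the same-edge constraint of Scenario~\ref{scenario1} that powers the strict multishot improvement announced in the introduction.
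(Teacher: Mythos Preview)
Your proposal follows essentially the same strategy as the paper: the outer code $C=\{(\mathbf a,\mathbf a,\mathbf a):\mathbf a\in\mA^i\setminus\{\mathbf 0\}\}$ and the case split on the attacked edge are identical, and your disjointness check is a more explicit version of the paper's.

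The one point of divergence is exactly the ``delicate step'' you flag. You let $\mF_{V_2}$ perform a \emph{block-level} equality test on the full $i$-tuples, whereas the paper applies the one-shot code of \cite[Proposition~11]{BEEMER202236} in each round separately. The paper's per-round code still works because it reads Scenario~\ref{scenario1} as forcing the adversary to actually change the symbol on its chosen edge in every round (``recall that in this case the adversary is forced to change the symbol''); under that reading, an attack on $e_2$ or $e_3$ causes a mismatch at $V_2$ in every round, so $\mathbf y_5=(\star,\dots,\star)$ with no buffering needed. Your block test yields the slightly larger set $\mathbf y_5\in\{\mathbf a,\mathbf 0\}$, but as your cross-intersection argument shows this still suffices for unambiguity, and it avoids relying on that interpretation of the adversary at the cost of allowing $V_2$ to act on whole $i$-blocks. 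Either reading resolves your flagged concern; you do not need both.
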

\begin{proof}
    Let $\star$ be a symbol of the alphabet $\mA$. We want to show that the code 
    \begin{multline*}
        C=\{(a\mid a\mid a): a\in\alp\setminus(\star,\ldots,\star)\}\subseteq (\alp^{i})^3
    \end{multline*}
    is unambiguous for $(\mD,\ad_\mD)$. At each round, we use the same network code $\mF$ as in the proof of~\cite[Proposition 11]{BEEMER202236}. Suppose the adversary corrupts $e
    _1$ and recall that in this case the adversary is forced to change the symbol. One can check that, for any $a\in\alp^i$, we have
    \begin{equation*}
        \Omega^i[\mD, \ad, \mF, \textup{out}(S) \to \textup{in}(T) ]((a\mid a\mid a))=\{(b \mid a)\}
    \end{equation*}
    for some $b\in\alp^i\setminus\{(\star,\ldots,\star)\}$. On the other hand, if the adversary corrupts $e_2$ or $e_3$ then,  for any $a\in\alp^i\setminus\{(\star,\ldots,\star)\}$, we get
    \begin{multline*}
        \Omega^i[\mD, \ad, \mF, \textup{out}(S) \to \textup{in}(T)]((a\mid a\mid a))= \\
        \{(a\mid\star,\ldots,\star)\}.
    \end{multline*}
    It follows that, for any $c,c'\in C$, we have 
    \begin{multline*}
         \Omega^i[\mD, \ad, \mF, \textup{out}(S) \to \textup{in}(T) ](c)\cap\\
            \Omega[\mN, \ad, \mF, \textup{out}(S) \to \textup{in}(T) ](c')\neq\emptyset
    \end{multline*}
    if and only if $c= c'$ which implies that the code $C$ is unambiguous. This concludes the proof.
\end{proof}

Note that the appropriateness of the construction of $C$ in Proposition \ref{prop:lowA1} relies on the adversary not being able to change the edge attacked, meaning we know the exact location of the adversary for the next $i-1$ transmission rounds. Therefore, the strategy provided in \cite[Proposition 11]{BEEMER202236} can be applied to $i$ transmission rounds by modeling the unambiguous code in $(\alp^{i})^3$. The lower bound provided in Proposition \ref{prop:lowA1} is a strict improvement on the lower bound provided by Proposition \ref{lwpord}. This shows that we can send more information over multiple use of the Diamond Network.

The action of the adversary on $(\mD,\ad_{\mD})$ is modeled by the channel $\h_{\mD}:\alp^3\dashrightarrow\alp^3$ defined as $\h_{\mD}(x):=\{y\in \alp^3:d_{\h}(x,y)\leq 1\}$ for all $x\in\alp^3$, where $d^\h$ represents Hamming distance. Therefore, a code $C\subseteq\alp^{3}$ is unambiguous for $\h_{\mD}$ if $d^\h(C)= 3$. We can extend this to $i$ uses of the network and describe the adversary as a power channel $\h_{\mD}^i$. Let $x=(x_1,\ldots,x_{3i})\in(\alp^3)^i$ and we can define $x^{(j)}:=(x_{3j+1},x_{3j+2},x_{3j+3})$ for all $j\in\{0,\ldots,i-1\}$. Then, for any $x\in C$, we have
\begin{multline*}
    \h_{\mD}(x)=\{y\in(\alp^{3})^i:d^\h(x^{(j)},y^{(j)}) \\\textup{ for all }j\in\{0,\ldots,i-1\}\}.
\end{multline*}
and hence $C\subseteq(\alp^i)^3$ is good code for $\h_{\mD}^i$ if and only if for all $x,y\in C$, with $x\neq y$ we have $d^\h(x^{(j)},y^{(j)})=3$ for some $j\in\{0,\ldots,i-1\}$. 

Let $s\in\{1,2,3\}$ and define $\pi_s^i:\alp^{3i}\longrightarrow\alp^i$ to be the projection onto the components in the set $\{3j+s:j\in\{0,\ldots,i-1\}\}$. Intuitively, these are the components corresponding to the edge $e_s$ in each round. The following generalizes~\cite[Claim~A]{BEEMER202236}

\begin{lemma}
Let $\mF$ be a network code for $(\mD,\ad_\mD)$. If $C\subseteq\alp^{3i}$ be an unambiguous code for $(\mD,\ad_\mD,\mF)$ then $|\pi_1^i(C)|=|C|$.
\end{lemma}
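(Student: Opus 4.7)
The plan is to generalize the single-shot argument \cite[Claim~A]{BEEMER202236} to the $i$-shot setting via contradiction: assume there exist distinct codewords $c,c'\in C$ with $\pi_1^i(c)=\pi_1^i(c')$, then construct two adversarial strategies (one against the transmission of $c$, one against the transmission of $c'$) under which the terminal $T$ observes identical outputs. This contradicts the unambiguity of $C$, forcing $\pi_1^i$ to be injective on $C$ and hence $|\pi_1^i(C)|=|C|$.

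First I would decompose $c=(c^{(0)},\ldots,c^{(i-1)})$ with $c^{(j)}=(c^{(j)}_1,c^{(j)}_2,c^{(j)}_3)\in\alp^3$ denoting the triple of symbols sent along $(e_1,e_2,e_3)$ in round $j$, and likewise for $c'$; the hypothesis $\pi_1^i(c)=\pi_1^i(c')$ translates to $c^{(j)}_1=(c')^{(j)}_1$ for every $j\in\{0,\ldots,i-1\}$. Consistent with Scenario~\ref{scenario1}, I let the adversary attacking $c$ fix edge $e_2$ throughout all $i$ rounds and the adversary attacking $c'$ fix edge $e_3$ throughout all $i$ rounds; nothing in Scenario~\ref{scenario1} forces the two adversaries to pick the same edge or to corrupt in every round.

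In each round $j$ the two adversaries act independently so as to align the input of $V_2$: the $c$-adversary changes $e_2$ to $(c')^{(j)}_2$ when $c^{(j)}_2\neq (c')^{(j)}_2$ (a legal corruption, since the symbols differ) and leaves $e_2$ untouched otherwise; symmetrically, the $c'$-adversary changes $e_3$ to $c^{(j)}_3$ when $c^{(j)}_3\neq (c')^{(j)}_3$ and does nothing otherwise. A routine four-case check on whether each of $c^{(j)}_2=(c')^{(j)}_2$ and $c^{(j)}_3=(c')^{(j)}_3$ holds confirms that in every round $V_2$ receives the common pair $((c')^{(j)}_2,c^{(j)}_3)$ under both transmissions; since $e_1$ is never touched and $c^{(j)}_1=(c')^{(j)}_1$, $V_1$ also receives identical inputs in both transmissions. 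Consequently $T$ observes the same $i$-shot output in both cases, and so
\[
\Omega^i[\mD,\ad_\mD,\mF,\textup{out}(S)\to\textup{in}(T)](c)\cap\Omega^i[\mD,\ad_\mD,\mF,\textup{out}(S)\to\textup{in}(T)](c')\neq\emptyset,
\]
contradicting unambiguity and completing the argument.

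The main obstacle is the per-round case analysis, which must coherently cover rounds where $c^{(j)}=(c')^{(j)}$ (both adversaries stay idle), rounds where exactly one of the $e_2$- or $e_3$-components differs (only the corresponding adversary acts), and rounds where both components differ (both adversaries act with coordinated replacement symbols). Everything else is the direct multishot transcription of the single-shot Claim~A.
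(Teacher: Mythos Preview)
Your proposal is correct and follows essentially the same contradiction strategy as the paper: assume $\pi_1^i$ collapses two distinct codewords and derive a violation of unambiguity. The paper's write-up is terser because it invokes the Hamming-distance characterization of good codes for $\h_{\mD}^i$ established just before the lemma (if $\pi_1^i(x)=\pi_1^i(y)$ then $d^{\h}(x^{(j)},y^{(j)})\le 2$ for every $j$, contradicting the requirement that some block have distance~$3$), whereas you unpack this into explicit edge-fixed adversarial strategies; your version has the minor advantage of being visibly compatible with Scenario~\ref{scenario1} (each adversary commits to a single edge across all rounds), but the underlying idea is identical.
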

\begin{proof}
   We already showed that $C$ is unambiguous for the Diamond Network if and only if for any $x,y\in C$, with $x\neq y$, $d^\h(x^{(j)},y^{(j)})=3$ for some $j\in\{0,\ldots,i-1\}$. Suppose, towards a contradiction, that there exist $x,y\in C$ with $x\neq y$ and $\pi^i(x)=\pi^i(y)$. It implies that $x_{3j+1}=y_{3j+1}$ and therefore $d^\h(x^{(j)},y^{(j)})\leq 2$ for all $j\in\{0,\ldots,i-1\}$. This leads to a contradiction.
\end{proof}

The following result generalizes~\cite[Claim~B]{BEEMER202236}.

\begin{lemma}
Let $\mF$ be a network code for $(\mD,\ad_\mD)$. If $C\subseteq(\alp^{i})^3$ be an unambiguous code for $(\mD,\ad_\mD,\mF)$ then the restriction of $\mF_{V_1}$ to $\pi_1^i(C)$ is injective. 
\end{lemma}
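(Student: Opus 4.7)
The plan is a proof by contradiction, extending the one-shot argument of \cite[Claim~B]{BEEMER202236} shot by shot. I assume that $\mF_{V_1}$ fails to be injective on $\pi_1^i(C)$, take witnesses $x,y \in C$ with $\pi_1^i(x) \neq \pi_1^i(y)$ but $\mF_{V_1}(\pi_1^i(x)) = \mF_{V_1}(\pi_1^i(y))$, and note that by the preceding lemma $\pi_1^i|_C$ is injective, so in particular $x \neq y$ in $C$. The goal is then to exhibit a pair of adversarial strategies, one for $x$ and one for $y$, that yield the same vector at $T$, contradicting unambiguity.

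The key idea is to equalize the signals on both outgoing edges of the intermediate nodes at once: on $e_4$ equality is handed to us by the failure of injectivity, and on $e_5$ I equalize the inputs to $V_2$ via a symmetric pair of single-edge substitutions. Concretely, when $x$ is transmitted the adversary corrupts $e_2$ and imposes the vector $\pi_2^i(y)$ on it over the $i$ rounds; when $y$ is transmitted the adversary corrupts $e_3$ and imposes $\pi_3^i(x)$ on it. Each of these strategies uses a single fixed edge throughout the $i$ rounds, so it fits Scenario~\ref{scenario1} with $t=1$, with the understanding that in any round where the target symbol already matches the original the adversary performs no substitution in that round.

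Under these attacks, $V_2$ receives the pair $(\pi_2^i(y), \pi_3^i(x))$ in both executions and hence outputs the same vector on $e_5$, while $V_1$ receives $\pi_1^i(x)$ under the first attack and $\pi_1^i(y)$ under the second but, by the assumed non-injectivity, produces the same output on $e_4$. Therefore the two vectors arriving at $T$ coincide, contradicting the unambiguity of $C$ for $(\mD,\ad_\mD,\mF)$ and forcing $\mF_{V_1}$ to be injective on $\pi_1^i(C)$.

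The main point requiring care is verifying that the substitution attacks above are genuine Scenario~\ref{scenario1} corruptions: each targets a single fixed edge across all $i$ rounds, and the per-round change to an arbitrary symbol (or no change in some rounds) is exactly what the Hamming-ball channel $\h_\mD^i$ permits on the attacked edge. Once this is checked, the argument is simply the shotwise product of the $i=1$ case treated in \cite[Claim~B]{BEEMER202236}, and no further technical obstacle arises.
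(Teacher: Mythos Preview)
Your proof is correct and follows essentially the same approach as the paper: assume non-injectivity, pick distinct $x,y\in C$ with $\mF_{V_1}(\pi_1^i(x))=\mF_{V_1}(\pi_1^i(y))$, and use a symmetric pair of single-edge corruptions on $e_2$ and $e_3$ so that $V_2$ sees the same pair in both runs while $V_1$ outputs the same symbol by hypothesis, forcing a collision at $T$. The paper exhibits the common received vector $\bigl(\mF_{V_1}(\pi_1^i(x))\mid \mF_{V_2}(\pi_2^i(x)\mid\pi_3^i(y))\bigr)$, which is just the mirror of your $(\pi_2^i(y),\pi_3^i(x))$ choice; your extra remark invoking the preceding lemma is unnecessary (since $\pi_1^i(x)\neq\pi_1^i(y)$ already forces $x\neq y$) but harmless.
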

\begin{proof}
    Suppose, towards a contradiction, that there exist $x,y\in C$ such that $x\neq y$ and $\mF_{V_1}(\pi_1^i(x))=\mF_{V_2}(\pi_1^i(x))$. One can check that the vector 
    \begin{equation*}
        \left(\mF_{V_1}(\pi_1^i(x))\mid \mF_{V_2}\left(\pi_2^i(x)\mid \pi_3^i(y)\right)\right)\in\alp^{2i}
    \end{equation*}
    is in $\Omega^i[\mD,\ad_\mD,\mF,\textup{out}(S) \to \textup{in}(T)](x)\cap\Omega^i[\mD,\ad_\mD,\mF,\textup{out}(S) \to \textup{in}(T)](y)$. This implies that $C$ is not unambiguous for $(\mD,\ad_{\mD})$ which is a contradiction. 
\end{proof}

Following the notation in \cite{BEEMER202236}, we let
\begin{align*}
    \overline{\Omega}^i:=&\Omega^i[\mD,\ad_\mD,\mF,\{e_1,e_2,e_3\} \to \{e_2,e_3\}],\\
    \Omega^i:=&\Omega^i[\mD,\ad_\mD,\mF,\{e_1,e_2,e_3\} \to \{e_5\}]
\end{align*}
which is well-defined since $\{e_1,e_2,e_3\}$ precedes $\{e_5\}$. The following result generalizes~\cite[Claim~C]{BEEMER202236}.

\begin{lemma}
Let $\mF$ be a network code for $(\mD,\ad_\mD)$. If $C\subseteq\alp^{3i}$ is an unambiguous code for $(\mD,\ad_\mD,\mF)$ then there exists at most one element $x\in C$ such that $|\Omega^i(x)|=1$. 
\end{lemma}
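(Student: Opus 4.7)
The plan is to argue by contradiction. Assume there exist two distinct codewords $x,y\in C$ with $|\Omega^i(x)|=|\Omega^i(y)|=1$. For each round $j\in\{0,\ldots,i-1\}$ write $x_s^{(j)}$ for the symbol on edge $e_s$ in round $j$ under input $x$, and analogously for $y$. The argument generalizes \cite[Claim~C]{BEEMER202236} by exploiting the per-round freedom of the adversary in Scenario~\ref{scenario1}.

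First I would unpack what $|\Omega^i(x)|=1$ means. The channel $\Omega^i$ records only the symbols exiting on $e_5$, which depend solely on the inputs to $V_2$ along $e_2$ and $e_3$. In Scenario~\ref{scenario1} the adversary fixes one attacked edge in $\{e_1,e_2,e_3\}$ across all $i$ rounds, but may substitute any symbol in each individual round. Since attacks on $e_1$ leave $e_5$ untouched, while attacks on $e_2$ or $e_3$ can vary arbitrarily round by round, the condition $|\Omega^i(x)|=1$ is equivalent to the per-round identities
\begin{equation*}
\mF_{V_2}(a,x_3^{(j)}) \;=\; \mF_{V_2}(x_2^{(j)},x_3^{(j)}) \;=\; \mF_{V_2}(x_2^{(j)},c)
\end{equation*}
for every $j\in\{0,\ldots,i-1\}$ and every $a,c\in\alp$, and similarly for $y$.

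Specializing $a=y_2^{(j)}$ in the first identity and $c=x_3^{(j)}$ in the analog for $y$, I would then obtain
\begin{equation*}
\mF_{V_2}(x_2^{(j)},x_3^{(j)}) \;=\; \mF_{V_2}(y_2^{(j)},x_3^{(j)}) \;=\; \mF_{V_2}(y_2^{(j)},y_3^{(j)}),
\end{equation*}
so the honest output on $e_5$ at round $j$ agrees for $x$ and $y$. To conclude, I would consider the full transfer channel $\Omega^i[\mD,\ad_\mD,\mF,\textup{out}(S)\to\textup{in}(T)]$: the honest output produced by $x$ is $(\mF_{V_1}(x_1^{(j)}),\mF_{V_2}(x_2^{(j)},x_3^{(j)}))_{j=0}^{i-1}$, and I would exhibit it as a possible output of $y$ by letting the adversary attack $e_1$ in every round (permitted in Scenario~\ref{scenario1}) and replace $y_1^{(j)}$ with $x_1^{(j)}$ at round $j$; the output on $e_5$ then stays at $\mF_{V_2}(y_2^{(j)},y_3^{(j)})=\mF_{V_2}(x_2^{(j)},x_3^{(j)})$ by the previous display. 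Hence this tuple lies in both $\Omega^i[\mD,\ad_\mD,\mF,\textup{out}(S)\to\textup{in}(T)](x)$ and $\Omega^i[\mD,\ad_\mD,\mF,\textup{out}(S)\to\textup{in}(T)](y)$, contradicting the unambiguity of $C$.

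The main obstacle I anticipate is the first step: turning the multishot condition $|\Omega^i(x)|=1$ into clean per-round constraints on $\mF_{V_2}$. This reduction works precisely because in Scenario~\ref{scenario1} the adversary, although constrained to attack a fixed edge across all rounds, is still allowed to choose different substitute symbols in each round; varying one coordinate at a time then decouples the condition into $i$ copies of the one-shot constraint used in \cite[Claim~C]{BEEMER202236}. Once this reduction is established, the remaining steps combining the functional identities and exhibiting a common adversarial output are essentially bookkeeping.
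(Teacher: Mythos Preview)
Your proposal is correct and follows essentially the same contradiction argument as the paper: use $|\Omega^i(x)|=|\Omega^i(y)|=1$ to force the $e_5$-outputs of $x$ and $y$ to agree, then let the adversary attack $e_1$ to make the $e_4$-outputs agree as well. The only cosmetic difference is that the paper works directly with the $i$-tuples, observing that the single bridge element $(\pi_2^i(x)\mid\pi_3^i(y))$ lies in both $\overline{\Omega}^i(x)$ and $\overline{\Omega}^i(y)$, whereas you first decompose $|\Omega^i(x)|=1$ into per-round functional identities on $\mF_{V_2}$ before specializing; this per-round formulation uses that the network code is applied round by round, while the paper's tuple-level argument is agnostic to that point, but the underlying idea is identical.
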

\begin{proof}
Suppose, towards a contradiction, that there exist $x,y\in C$ such that $|\Omega^i(x)|=|\Omega^i(y)|=1$ and $x \neq y$. It implies $|\mF_{V_2}(\overline{\Omega}^i(x))|=|\mF_{V_2}(\overline{\Omega}^i(y))|=1$. Since $(\pi_2^i(x)\mid\pi_3^i(x)),(\pi_2^i(x)\mid\pi_3^i(y))\in\overline{\Omega}^i(x)$ and $(\pi_2^i(y)\mid\pi_3^i(y)),(\pi_2^i(x)\mid\pi_3^i(y))\in\overline{\Omega}^i(y)$, we have 
\begin{align*}
    \mF_{V_2}(\pi_2^i(x)\mid\pi_3^i(x))&= \mF_{V_2}(\pi_2^i(x)\mid\pi_3^i(y)) \\ 
    &=\mF_{V_2}(\pi_2^i(y)\mid\pi_3^i(y)).
\end{align*}
It follows that 
\begin{multline*}
    \Omega^i[\mD,\ad_\mD,\mF,\textup{out}(S) \to \textup{in}(T)](x)\cap\\
    \Omega^i[\mD,\ad_\mD,\mF,\textup{out}(S) \to \textup{in}(T)](y)\neq\emptyset
\end{multline*}
since the adversary can corrupt the edge $e_1$. This is a contradiction.
\end{proof}

We are now ready to prove a result  analogous to \cite[Proposition~12]{BEEMER202236}.

\begin{proposition}\label{prop:upA1}
Let $\mF$ be a network code for $(\mD,\ad_\mD)$. If $C\subseteq\alp^{3i}$ is an unambiguous code for $(\mD,\ad_\mD,\mF)$ then
\begin{equation*}
    |C|^2+|C|-1-|\alp|^{2i}\leq 0.
\end{equation*}
In particular, we have $|C|\leq |\alp|^i-1$.
\end{proposition}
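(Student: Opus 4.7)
The plan is a double-counting argument on the set of outputs at the terminal. Write $v_1(x) := \mF_{V_1}(\pi_1^i(x))$, $v_2(x) := \mF_{V_2}(\pi_2^i(x)\mid\pi_3^i(x))$, and $M(x) := \mF_{V_2}(\overline{\Omega}^i(x)) \subseteq \alp^i$, noting that $|M(x)| = |\Omega^i(x)|$. Combining the two preceding lemmas gives that $v_1 = \mF_{V_1} \circ \pi_1^i$ is injective on $C$, and the third preceding lemma says that at most one $x \in C$ satisfies $|M(x)| = 1$, so $\sum_{x \in C} |M(x)| \geq 2|C| - 1$.

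The key step is to exhibit two explicit subsets of $\Omega^i[\mD,\ad_\mD,\mF,\textup{out}(S) \to \textup{in}(T)](x)$, namely
\begin{equation*}
A_x := \{(v_1(y), v_2(x)) : y \in C\} \quad\text{and}\quad B_x := \{v_1(x)\} \times M(x).
\end{equation*}
Each element of $A_x$ arises by letting the adversary attack $e_1$ and forward the vector $\pi_1^i(y)$ to $V_1$, while each element of $B_x$ arises from an attack on $e_2$ or $e_3$ leaving $V_1$ untouched. Injectivity of $v_1$ on $C$ gives $|A_x| = |C|$, and $A_x \cap B_x = \{(v_1(x), v_2(x))\}$, since $v_2(x) \in M(x)$ and $v_1(y) = v_1(x)$ forces $y = x$. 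Hence
\begin{equation*}
|\Omega^i[\mD,\ad_\mD,\mF,\textup{out}(S) \to \textup{in}(T)](x)| \geq |A_x \cup B_x| = |C| + |M(x)| - 1.
\end{equation*}

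Unambiguity of $C$ makes the sets $\Omega^i[\mD,\ad_\mD,\mF,\textup{out}(S) \to \textup{in}(T)](x)$ pairwise disjoint subsets of $\alp^i \times \alp^i$, so summing yields
\begin{equation*}
|C|^2 + \sum_{x \in C} |M(x)| - |C| \leq |\alp|^{2i}.
\end{equation*}
Substituting $\sum_{x \in C} |M(x)| \geq 2|C| - 1$ collapses this to $|C|^2 + |C| - 1 \leq |\alp|^{2i}$. The final bound $|C| \leq |\alp|^i - 1$ follows from a short integer check: if $|C| \geq |\alp|^i$, then $|C|^2 + |C| - 1 \geq |\alp|^{2i} + |\alp|^i - 1 > |\alp|^{2i}$ since $|\alp| \geq 2$, a contradiction.

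The main obstacle is resisting the temptation to use the full ``column'' $\mF_{V_1}(\alp^i) \times \{v_2(x)\}$ coming from $e_1$-attacks: that route leads only to a linear bound of the form $|C|(|\alp|^i+1) \leq |\alp|^{2i}+1$. Restricting to the subset $A_x$, whose first coordinates are indexed by $v_1(C)$, is what replaces a factor $|\alp|^i$ by $|C|$ in the count and produces the quadratic term needed to extract the clean statement $|C|^2 + |C| - 1 \leq |\alp|^{2i}$.
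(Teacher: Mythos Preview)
Your argument is correct and follows essentially the same double-counting strategy as the paper: both proofs decompose the terminal's output set for each $x$ into the contribution from attacks on $e_1$ (your $A_x$, the paper's $\widehat{\Omega}_1^i(x)$) and from attacks on $e_2,e_3$ (your $B_x$, the paper's $\widehat{\Omega}_2^i(x)$), bound the first by $|C|$ via injectivity of $\mF_{V_1}\circ\pi_1^i$ on $C$ and the second in aggregate by $2|C|-1$ via the third lemma, and then sum using disjointness of the output sets. Your write-up is in fact cleaner than the paper's, which contains some notational slips in the definitions of $\widehat{\Omega}_1^i$ and $\widehat{\Omega}_2^i$.
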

\begin{proof}
For ease of notation, we define
\begin{align*}
\widehat{\Omega}^i:=&\Omega^i[\mD,\ad_\mD,\mF,\{e_1,e_2,e_3\} \to \{e_4,e_5\}],\\ 
\widehat{\Omega}_1^i:=&\{y\in\widehat{\Omega}^i:\mF_{V_1}(\pi_1^i(y))=\pi_1^i(x)\},\\
    \widehat{\Omega}_2^i:=&\{y\in\widehat{\Omega}^i:\mF_{V_2}((\pi_2^i(y)\mid\pi_2^3(y)))\\
    =&(\pi_2^i(x)\mid\pi_2^3(x))\}.
\end{align*}
for any $x\in C$, and write $\widehat{\Omega}^i(x)=\widehat{\Omega}_1^i(x)\cup\widehat{\Omega}_2^i(x)$. By definition, we have $|\widehat{\Omega}^i(x)|=|\widehat{\Omega}_1^i(x)|+|\widehat{\Omega}_2^i(x)|-1$. Using the three lemmas above we get
\begin{align}\nonumber
\sum_{x\in C}|\widehat{\Omega}^i(x)|&\geq 1-2(|C|-1)+\sum_{x\in C}|C|-|C|\\\label{eq:lemmas}
&=2|C|-1+|C|^2-|C|\\\nonumber
&=|C|^2-|C|-1.
\end{align}
On the other hand, since the code is unambiguous, we have
\begin{equation*}
    \sum_{x\in C}|\widehat{\Omega}^i(x)|\leq|\alp|^{2i}.
\end{equation*}
Combining this with \eqref{eq:lemmas} we get the statement.
\end{proof}

As a consequence of Propositions~\ref{prop:lowA1} and~\ref{prop:upA1}, we have that in the Scenario~\ref{scenario1} the maximum size of an unambiguous code for $(\mD,\ad_\mD,\mF)$ is exactly $|\alp|^i-1$ and therefore $\C_i(\mD,\ad_\mD)=\frac{\log_{|\mA|}(|\mA|^i-1)}{i}$. Therefore, we have a gain in capacity over multiple uses of $(\mD,\ad_{\mD})$.

\subsection{The Mirrored Diamond Network}

The goal of this section is to explicitly compute the multishot capacity of the Mirrored Diamond Network in Scenario \ref{scenario1}.
Let $\Omega_{\mathcal{S}}:=\Omega[\mathcal{S}, \ad_{\mathcal{S}}, \mF,\textup{out}(S) \to \textup{in}(T)]$ be the channel that represents the transfer from $S$ to $T$ of $\mathcal{S}$. Assume that $t=1$. Applying the strategy provided in \cite[Proposition V.1]{BEEMER202236} one can check that $C_i(\mS,\ad_{\mS}) \geq 1$.  We will show that 
\begin{equation*}
C_i(\mS, \ad_{\mS}) = 1
\end{equation*}
and hence the Singleton Cut-Set Bound can be achieved in each transmission round independently. As in the previous section, we model the action of an adversary acting on $(\mS,\ad_{\mS})$ by the channel $\h_{\mS}: \mA^4 \dashrightarrow \mA^4$ defined by $\h_{\mS}(x) := \{y \in \alp': d^{\h}(x,y) \leq 1\}$ for all $x \in \mA$. One can check that the largest unambiguous code for $\h_{\mS}$ has cardinality $|\alp|$ and there is no unambiguous code with larger cardinality. Thus $C_1(\h_{\mS}) = 1$. 

\begin{notation} We let 
\begin{multline*}
\Omega_i' := (\Omega_{\mF_1}[\textup{out}(S) \to \textup{in}(T)] \concat \h_{\mathcal{S}}) \times \ldots\\
 \times (\Omega_{\mF_i}[\textup{out}(S) \to \textup{in}(T)] \concat \h_{\mathcal{S}})
\end{multline*}
represent the channel describing $i$ uses of the network $(\mS,\ad_{\mS})$, where $\mathcal{F}_j$ denotes the network code used in transmission round $j$, with $j\in\{1, \ldots, i\}$. Notice that $\Omega_{i}': (\mA^4)^i \dashrightarrow (\mA^2)^i$.
\end{notation}

We now show that there does not exist an unambiguous code $C\subseteq (\alp^{4})^i$ for the adversary channel $\h_{\mathcal{S}}^i$ such that $|C| = |\alp|^i + 1$.

\begin{example}\label{cne}
For $x\in(\alp^{4})^i$, we let $x^j:= (x_{4j+1},x_{4j+2},x_{4j+3},x_{4j+4})$ for all $j\in{0,\ldots,i-1}$. The code $C\subseteq(\alp^{4})^i$ is unambiguous for $\h_{\mS}^i$ if and only if one among $d^{\h}(x^1,y^1), \ldots, d^{\h}(x^i,y^i)$ has Hamming distance atleast $3$ for all $x,y$ with $x \neq y$. Let $c_1, \ldots, c_{|\mA|^i + 1}$ be elements in $C$. We want to show that there are no two codewords of $C$ that coincide in the first $|\mA|^i$ components. Assume, towards a contradiction, that $c_1^1 = c_2^1$. We can also assume that $c_1 = 0$ without loss of generality, which implies $c_2^1 = 0$. It follows that $c_3^1, \ldots, c_{|\mA|^i+1}^1$ must have Hamming weight at least $3$. If $c_3^1$ has Hamming weight less than $3,$ then one among $\{c_1^2,c_2^2,c_3^2\}, \ldots, \{c_1^i,c_2^i,c_3^i\}$ is an unambiguous code for $\h_{\mS}$ of cardinality $3$ with minimum Hamming distance $3$, which contradicts the fact that $C_1(\h_{\mS}) = 1.$ On the other hand, since $c_3^1, \ldots, c_{|\alp|^i + 1}^1$ have Hamming weight at least $3$, we have that the Hamming distance between two elements in $\{c_3^1, \ldots,c_{|\alp|^i + 1}^1\}$ is atmost $2$. Therefore assuming that $C$ is good for $\h_{\mS}^i$ implies that one of $\{c_1^2,c_2^2,c_3^2\}, \ldots, \{c_1^i,c_2^i,c_3^i\}$ must be a code with cardinality $3$ and minimum Hamming distance $3$, which again contradicts $C_1(\h_{\mS}) = 1$.
\end{example}

We are now ready to prove the main result of this section.

\begin{proposition}\label{umd} The $i$-shot capacity of the Mirrored Diamond Network in Scenario \ref{scenario1} is $$C_i(\mathcal{S}, \ad_{\mathcal{S}}) 
 = 1.$$ 
\end{proposition}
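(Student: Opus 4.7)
The plan is to prove $C_i(\mS, \ad_\mS) = 1$ by establishing matching upper and lower bounds, with the lower bound already indicated in the preceding paragraph and the upper bound reducing to Example~\ref{cne}.

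For the lower bound, I would reuse the one-shot scheme of \cite[Proposition~V.1]{BEEMER202236} independently in each of the $i$ transmission rounds. Since that scheme produces an unambiguous code of size $|\mA|$ for a single use, taking the Cartesian product across the $i$ rounds yields an unambiguous code $C \subseteq (\mA^4)^i$ of cardinality $|\mA|^i$. This gives $\log_{|\mA|}(|C|)/i = 1$, hence $C_i(\mS,\ad_\mS) \geq 1$.

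For the upper bound, the plan is to transfer the non-existence statement of Example~\ref{cne} from the adversary channel $\h_\mS^i$ to the network. The key observation is that any code $C \subseteq (\mA^4)^i$ that is unambiguous for $(\mS,\ad_\mS,\mF)$ must also be unambiguous for $\h_\mS^i$. Indeed, if there were distinct $x,y \in C$ and some $z \in \h_\mS^i(x) \cap \h_\mS^i(y)$, then the deterministic processing at $V_1$ and $V_2$ applied to $z$ would produce the same symbols on $e_4$ and $e_5$ in both cases, placing a common vector in the intersection of the output sets at $T$ and contradicting unambiguity. Example~\ref{cne} shows that no unambiguous code for $\h_\mS^i$ has cardinality $|\mA|^i + 1$, and since any subset of an unambiguous code is unambiguous, this immediately rules out all cardinalities strictly greater than $|\mA|^i$. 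Combining the two bounds yields $C_i(\mS,\ad_\mS) = 1$.

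The main obstacle is the passage from the channel level to the network level; once that is cleanly established, the conclusion follows from Example~\ref{cne} together with the one-shot construction. A secondary subtlety is ensuring that the argument is uniform over the choice of network code $\mF_1,\ldots,\mF_i$ used in the $i$ rounds, but this is immediate since the reduction to $\h_\mS^i$-unambiguity holds for any choice of node functions.
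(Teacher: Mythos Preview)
Your lower bound is fine; the gap is in the upper bound. The channel $\h_{\mS}^i$ is the $i$-fold \emph{product} of $\h_{\mS}$, so it allows the corrupted coordinate to vary from round to round --- that is the adversary of Scenario~\ref{scenario2}, not Scenario~\ref{scenario1}. In Scenario~\ref{scenario1} the adversary must attack the \emph{same} edge in every round, so a vector $z\in\h_{\mS}^i(x)$ need not lie in the Scenario~\ref{scenario1} fan-out of $x$, and hence the processed vector need not lie in the network output set of $x$ at $T$. Your implication ``unambiguous for $(\mS,\ad_{\mS},\mF)$ $\Rightarrow$ unambiguous for $\h_{\mS}^i$'' is therefore false. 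For a concrete witness take $i=2$, $|\mA|\ge 3$, let every node function be the product $(a,b)\mapsto ab$ over $\Fq$, and set $x=\bigl((0,0,0,0),(0,0,0,0)\bigr)$, $y=\bigl((1,1,0,0),(0,0,1,1)\bigr)$. Then $d^{\h}(x^j,y^j)=2$ for $j=1,2$, so $\{x,y\}$ is not good for $\h_{\mS}^2$; yet the Scenario~\ref{scenario1} output set of $x$ is the single point $(0,0,0,0)$, while every vector in the Scenario~\ref{scenario1} output set of $y$ has either first coordinate $1$ or last coordinate $1$, so the two output sets are disjoint and $\{x,y\}$ \emph{is} unambiguous for $(\mS,\ad_{\mS},\mF)$.

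The paper's route does not try to make $C$ itself good for $\h_{\mS}^i$; it first passes to the image code $C'=\{(\mF^1(x_1,\ldots,x_4),\ldots,\mF^i(x_{4i-3},\ldots,x_{4i})):x\in C\}\subseteq\mA^{2i}$ and uses a pigeonhole step before invoking Example~\ref{cne}. A clean way to run that idea directly in Scenario~\ref{scenario1} is the following: if $|C|>|\mA|^i$ then, by pigeonhole on the $V_2$-output string $\bigl(\mF_{V_2}^j(x_{4j-1},x_{4j})\bigr)_{j=1}^i\in\mA^i$, there are $x\neq y$ in $C$ with identical $V_2$-outputs in every round. Now let the adversary attack $e_1$ on input $x$ (replacing $x_{4j-3}$ by $y_{4j-3}$ for every $j$) and attack $e_2$ on input $y$ (replacing $y_{4j-2}$ by $x_{4j-2}$ for every $j$); the $V_1$-outputs then coincide as well, contradicting unambiguity. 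Both adversary strategies use a single fixed edge across all $i$ rounds, so this argument genuinely lives in Scenario~\ref{scenario1}.
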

\begin{proof}
We use an approach similar to the one in \cite[Example 56]{ravagnani2018}. Let $\mF$ be a network code for $(\mathcal{S}, \ad_{\mathcal{S}})$ and assume that $\ad$ can corrupt at most $1$ edge from $\{e_1,e_2,e_3,e_4\}$. Let $x = (x_1, \ldots, x_{4i}) \in \mA^{4i}$. We have that 
\begin{multline*}
\Omega_{i}'(x) = \h_{\mS}^i(\mF^1(x_1,x_2,x_3,x_4),\ldots,\\
\mF^i(x_{4i-3},x_{4i-2},x_{4i-1},x_{4i})).
\end{multline*}
We want to show that $C_1(\Omega_{i}') < |\alp|^i + 1$. Assume that there exists a code $C \subseteq (\alp^{4})^i$ with $|C| = |\alp|^i + 1$ which is good for $\Omega_{i}'$. Then 
\begin{multline*}
C' := \{\mF^1(x_1,x_2,x_3,x_4),\ldots, \\
\mF^i(x_{4i-3},x_{4i-2},x_{4i-1},x_{4i})): x \in C\} \subseteq  \mA^{2i}
\end{multline*}
is an unambiguous code for $\h_{\mS}^i$ with $|C'| = |\mA|^i + 1$. We have that there must exist $a,a' \in C$, $a \neq a'$ such that $(a_1, \ldots, a_{|\mA|^i}) = (a_1', \ldots, a_{|\mA|^i}')$. Therefore, $C'$ is a good code for $\h_{\mS}^i$ which has two different code words that coincide in the first $|\mA|^i$ components. However, by Example~\ref{cne}, this code does not exist. This implies that $C_1(\Omega_{i}') < \textup{log}_{|\mA|}(|\mA|^i + 1)$ and we get $C_i(\mathcal{S}, \ad_{\mathcal{S}}) \leq 1$.

It remains to show that $C_i(\mathcal{S},\ad_{\mathcal{S}}) \geq 1$. This immediately follows by \cite[Proposition 12]{BEEMER202236} using the channel $\Omega^i_{\mathcal{S}}$. In particular, we have that 
\[C_i(\Omega_{\mathcal{S}}) = \frac{C_1(\Omega^i_{\mathcal{S}})}{i} \geq \frac{i C_1(\Omega_{\mathcal{S}})}{i} = C_1(\Omega_{\mS}) = 1.\]
It follows the statement.
\end{proof}

We concludes this section with the following  remark on the mulishot capacity of Family $\mathfrak{C}$ and Family $\mathfrak{D}$.

\begin{remark}\label{famR}
It is not hard to check that for an adversary able to attack up to $t$ edges, $C_i(\mathfrak{C}, \ad_{\mathfrak{C}}) = 1$ and $C_i(\mathfrak{D},\ad_{\mathfrak{D}}) = 1.$ This follows by using arguments similar to the ones in Example~\ref{cne} and in the proof of Proposition~\ref{umd}, under the assumption that the adversary can corrupt up to $t$ edges, with $t\geq 2$.
\end{remark}

In summary, we showed that in Scenario~\ref{scenario1}, there is a gain in capacity for multiple uses of the Diamond Network. In contrast  the Mirrored Diamond Network, and the networks in families $\mathfrak{C}$ and $\mathfrak{D}$, have a constant capacity $1$ over multiple uses.

\section{Multishot Regime II}
\label{sec:regII}
In this section, we study the multishot capacity of the Diamond and Mirrored Diamond Network along with the families $\mathfrak{C}$ and $\mathfrak{D}$ in Scenario \ref{scenario2}.

\subsection{The Diamond Network}
In the main result of this section, we show that the $i$-shot capacity of the Diamond Network in Scenario~\ref{scenario2} is
\begin{equation*}
\C_i(\mD,\ad_\mD)=\textup{log}_{|\alp|}{(|\alp| - 1)}.
\end{equation*}
This shows that for multiple uses of the Diamond Network, there is no gain in capacity, in contrast with Scenario \ref{scenario1}. In the following examples, we provide a characterization of an unambiguous code for $\h_{\mD}$.

\begin{example} Let $\h_{\mD}$ be the adversarial channel for $\mD$ as in Section~\ref{sec:regI}. The source $S$ can send any symbol of $\alp' = \alp \setminus \{\star\}$, where $\star$ is a reserved symbol in the alphabet $\alp$. One can check that, under the assumption of Scenario~\ref{scenario2},  the largest unambiguous code for $\h_{\mD}$ has cardinality $|\alp|-1$ and there is no larger unambiguous code. Recall that the source $S$ cannot send $\star$, implying $C_1(\h_{\mD}) = \textup{log}_{|\mA|}(|\mA|-1).$
\end{example}

The argument similar to the one in Scenario~\ref{scenario1} shows that a code $C \subseteq (\mA^{3})^i$ is unambiguous for $\h_{\mD}^i$ if and only if, for all $x,y \in C$ with $x \neq y$, we have $d^{\h}(x^{(j)}),y^{(j)}) \geq 3$, with $j\in\{1,\ldots,i-1\}$. In the next example, we show that there does not exist a code $C$ with $|C| = (|\mA| - 1)^i+1$ for $i$ uses of the Diamond Network.

\begin{example}\label{contE} 
We suppose that that there exists an unambiguous code  $C \subseteq (\mA^{3})^i$ for $\h_{\mD}^i$  such that $|C| < (|\alp|-1)^i + 1$. We use the same notation as in Example~\ref{cne}, for any $x = (x_1,\ldots,x_{3i})\in (\mA^{3})^i$, we let $x^1:= (x_1,x_2,x_3), \ldots, x^i:= (x_1^i,x_2^i,x_3^i)$ and $c_1, \ldots, c_{(|\alp|-1)^i + 1}$ be elements in $C$. We claim that there are no two codewords of $C$ that coincide in the first $(|\alp|-1)^i$ components. Assume, towards a contradiction, that $c_1^1 = c_2^1$ for $c_1^1 \neq c_2^1$ and we let $c_1 = 0$, without loss of generality. This implies that $c_2^1 = 0$ and that $c_3^1, \ldots, c_{(|\alp|-1)^i+1}^1$ must have Hamming weight at least $3$. Observe that if $c_3^1$ has Hamming weight less than $3,$ then one of $\{c_1^2,c_2^2,c_3^2\}, \ldots, \{c_1^i,c_2^i,c_3^i\}$ is a code in $\alp^3$ of cardinality $3$ with minimum Hamming distance $3$, contradicting $C_1(\h_{\mD}) = \textup{log}_{|\mA|}(|\mA|-1).$ On the other hand, since $c_3^1, \ldots, c_{(|\mA|-1)^i + 1}^1$ have Hamming weight at least $3$ which implies that the Hamming distance between any two elements in $\{c_3^1, \ldots,c_{(|\mA|-1)^i + 1}^1\}$ is at most $2$. Since we assumed that $C$ is unambiguous for $\h_{\mD}^i$, we have that one of $\{c_1^2,c_2^2,c_3^2\}, \ldots, \{c_1^i,c_2^i,c_3^i\}$ must be a code with cardinality $3$ and minimum Hamming distance $3$, which again contradicts the capacity of $\h_{\mD}$.
\end{example}

We introduce the following notation.

\begin{notation} We let 
\begin{multline*}
\Omega_i := (\Omega_{\mF_1}[\textup{out}(S) \to \textup{in}(T)] \concat \h_{\mathcal{D}}) \times \ldots \\
\times (\Omega_{\mF_i}[\textup{out}(S) \to \textup{in}(T)] \concat \h_{\mathcal{D}})
\end{multline*}
be the channel modeling $i$ uses of the network $(\mD,\ad_{\mD})$, where $\mathcal{F}_j$ denotes the network code used in transmission round $j$, with $j\in\{1, \ldots, i\}$. Note that $\Omega_{i}: (\mA^3)^i \dashrightarrow (\mA^2)^i$.
\end{notation}

We are now ready to prove the main theorem of this section.

\begin{theorem} 
The $i$-shot capacity of the Diamond Network in Scenario \ref{scenario2} is
$$\C_i(\mD,\ad_\mD)=\log_{|\alp|}(|\mA|-1).$$
\begin{proof} 

We first show that $|C| \leq (|\alp|-1)^i$. Assume that an adversary $\ad_{\mD}$ can corrupt at most of the edges in the set $\{e_1,e_2,e_3\}$. We observe that, for an $x = (x_1, \ldots, x_{3i}) \in \alp^{3i}$, we have
\begin{multline*}
    \Omega_i(x) = \h_{\mD}^i(\mF^1(x_1,x_2,x_3),\ldots,\\ \mF^i(x_{3i-2},x_{3i-1},x_{3i})).
\end{multline*}
Using a similar approach as in Proposition~\ref{umd}, we assume that there exists an unambiguous code $C \subseteq \alp^{3i}$ for $\Omega_i$ with $|C| = (|\alp|-1)^i + 1$. The code
\begin{multline*}
C' := \{\mF^1(x_1,x_2,x_3), \ldots, \\
\mF^i(x_{3i-2},x_{3i-1},x_{3i}))\} \subseteq  (\alp^{3})^i
\end{multline*}
is unambiguous for $\h_{\mD}^i$ of cardinality $(|\alp|-1)^i + 1$. Since $|C'| = (|\mA|-1)^i + 1$, there must exist $a,a' \in C$, $a \neq a'$ such that $(a_1, \ldots, a_{(|\alp|-1)^i}) = (a_1', \ldots, a_{(|\alp|-1)^i}')$. Thus $C' \in (\alp^{3})^i$ is an unambiguous code for $\h_{\mD}^i$ with two different code words that coincide in the first $(|\alp|-1)^i$ components. However such a code does not exist by Example~\ref{contE}. It follows $C_1(\Omega_i) < \textup{log}_{|\alp|}((|\alp|-1)^i + 1)$ and hence $|C| \leq (|\alp|-1)^i$. 
It remains to show that $|C| \geq (|\alp|-1)^i$. This immediately follows by applying \cite[Proposition 12]{ravagnani2018} and the strategy introduced in \cite[Proposition 11]{BEEMER202236} to the argument above.
\end{proof}
\end{theorem}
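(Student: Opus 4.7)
The plan is to prove the lower and upper bounds on $\C_i(\mD,\ad_\mD)$ separately.

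For the \emph{lower bound} $\C_i(\mD,\ad_\mD)\geq\log_{|\alp|}(|\alp|-1)$, I would apply Proposition~\ref{lwpord} to the channel $\Omega_\mD:=\Omega[\mD,\ad_\mD,\mF,\textup{out}(S)\to\textup{in}(T)]$, where $\mF$ is the capacity-achieving one-shot network code of \cite[Proposition~11]{BEEMER202236}. Since $C_1(\Omega_\mD)=\log_{|\alp|}(|\alp|-1)$, Proposition~\ref{lwpord} gives $C_1(\Omega_\mD^i)\geq i\log_{|\alp|}(|\alp|-1)$, and dividing by $i$ yields the desired inequality. Equivalently, one can construct an unambiguous code of size $(|\alp|-1)^i$ directly by applying the one-shot scheme independently in each of the $i$ transmission rounds, which works in Scenario~\ref{scenario2} because each per-round decoder only needs to handle the single-round adversary.

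For the \emph{upper bound} $\C_i(\mD,\ad_\mD)\leq\log_{|\alp|}(|\alp|-1)$, the plan is to assume for contradiction that some unambiguous code $C\subseteq\alp^{3i}$ for $\Omega_i$ has $|C|=(|\alp|-1)^i+1$, and to derive a contradiction. Pushing $C$ through the per-round network codes $\mF^j$ (which encode the intermediate-node processing in round $j$) produces a set $C'\subseteq(\alp^2)^i$ in the terminal's input space. Because the adversary acts independently in each round as the one-shot channel $\h_\mD$, the total adversarial action on the image is captured by the power channel $\h_\mD^i$, and the image $C'$ must be unambiguous for $\h_\mD^i$ with $|C'|=|C|$. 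The argument then reduces to ruling out the existence of such a $C'$, which is the content of Example~\ref{contE}.

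The main obstacle is this final combinatorial step: showing no unambiguous code for $\h_\mD^i$ has size $(|\alp|-1)^i+1$. I would proceed by induction on $i$. The base case $i=1$ is the one-shot capacity bound $|C|\leq|\alp|-1$ proved in \cite[Proposition~12]{BEEMER202236}. For the inductive step, I would partition the codewords of $C'$ by their first-round projection in $\alp^3$; each resulting fiber must itself be unambiguous for $\h_\mD^{i-1}$, giving fiber-size at most $(|\alp|-1)^{i-1}$ by the inductive hypothesis. Combining this with the one-shot bound on the number of pairwise distinguishable first-round values, a careful accounting of the ``mixed'' pairs of codewords (those that agree in some rounds and are distinguished only by others, as in Example~\ref{contE}) yields $|C'|\leq(|\alp|-1)^i$, contradicting the assumption. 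The subtle point, and the reason the bound is not immediate, is that codewords with the same first-round projection may be distinguished only by later rounds, so the fiber bound must be coupled with the first-round confusability structure rather than naively multiplied; making this coupling rigorous is where the work lies.
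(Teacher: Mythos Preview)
Your overall plan coincides with the paper's proof. The lower bound is handled exactly as you describe, by invoking Proposition~\ref{lwpord} together with the one-shot scheme of \cite[Proposition~11]{BEEMER202236}. For the upper bound, the paper also assumes a hypothetical unambiguous code of size $(|\alp|-1)^i+1$, passes to an induced code $C'$ that must be unambiguous for the product adversarial channel $\h_\mD^i$, and then invokes Example~\ref{contE} to obtain the contradiction. So the reduction you outline is precisely the paper's argument.

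Two small points are worth flagging. First, you place $C'$ in $(\alp^2)^i$, the terminal's input space. In the paper $C'$ lives in $(\alp^3)^i$: the adversary channel $\h_\mD$ acts on the three source edges \emph{before} the intermediate nodes process, so the object that must be unambiguous for $\h_\mD^i$ is a code in $(\alp^3)^i$, not in the post-processing space. This does not change the strategy, but the bookkeeping should be adjusted. Second, your proposed inductive route to the combinatorial bound is a genuine departure from the paper, which simply cites Example~\ref{contE} and its direct case analysis. As you yourself note, the induction is not immediate: bounding the number of first-round projections by $|\alp|-1$ fails because distinct codewords need not be separated in round~$1$, so the fiber bound and the first-round confusability do not multiply. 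The paper sidesteps this by the specific argument of Example~\ref{contE} rather than by induction; if you pursue the inductive line, you would still need to reproduce essentially that coupling, so you gain nothing over citing Example~\ref{contE} directly.
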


It is interesting to note that when the adversary can change the edge attacked in each transmission round, we recover the same scheme as in~\cite{BEEMER202236}. 

\subsection{Families $\mathfrak{C}$ and $\mathfrak{D}$}
In this section, we show that the multishot capacities of networks in families $\mathfrak{C}$ and $\mathfrak{D}$ is $1$. Recall that family $\mathfrak{D}$ includes the Mirrored Diamond Network.  We start with the following preliminary result.

\begin{proposition} 
The $i$-shot capacity of a network in families $\mathfrak{C}$ and $\mathfrak{D}$ in Scenario \ref{scenario2} is $1$. 
\end{proposition}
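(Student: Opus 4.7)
The plan is to prove $\C_i(\mathfrak{C}, \ad_{\mathfrak{C}}) = \C_i(\mathfrak{D}, \ad_{\mathfrak{D}}) = 1$ in Scenario~\ref{scenario2} by squeezing the value between matching lower and upper bounds. For the lower bound, I would let $\Omega$ denote the one-shot adversarial channel of the network in question and observe that the $i$-shot channel in Scenario~\ref{scenario2} coincides with the power channel $\Omega^i$ of Definition~\ref{power}, since under Scenario~\ref{scenario2} the adversary is free to choose a fresh set of $t$ attacked edges in each round. Applying Proposition~\ref{lwpord} then gives $C_1(\Omega^i) \geq i \cdot C_1(\Omega)$, so dividing by $i$ yields $\C_i \geq C_1(\Omega) = 1$; here $C_1(\Omega) = 1$ holds for each of these networks, as is already implicit in Remark~\ref{famR} (the two scenarios coincide when $i = 1$, and the one-shot capacity of these families is $1$).

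For the matching upper bound, the cleanest route is the observation that any Scenario~\ref{scenario1} adversary is a particular Scenario~\ref{scenario2} adversary, namely one that happens to pick the same $t$ edges in each round. Consequently every code that is unambiguous in Scenario~\ref{scenario2} is also unambiguous in Scenario~\ref{scenario1}, so the maximum size of an unambiguous code in Scenario~\ref{scenario2} is no larger than that in Scenario~\ref{scenario1}, and by Remark~\ref{famR} this common upper bound equals $1$ after dividing by $i$. Combining with the lower bound above gives the claimed equality.

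If a more self-contained argument is desired for stylistic parallelism with Proposition~\ref{umd} and the Diamond theorem above, I would instead mirror the structure of Example~\ref{cne} and Proposition~\ref{umd}: explicitly set up the $i$-shot power channel with $\h_{\mathfrak{C}}$ or $\h_{\mathfrak{D}}$ in place of $\h_{\mathcal{S}}$, and then show by contradiction that no unambiguous code of cardinality $|\alp|^i + 1$ can exist, arguing that two codewords agreeing on the first $|\alp|^i$ source-symbol coordinates in some round would force three further codewords whose projection to a single round gives a code of cardinality $3$ with minimum Hamming distance $2t+1$ on the attacked coordinates, contradicting $C_1(\h_{\mathfrak{C}}) = C_1(\h_{\mathfrak{D}}) = 1$. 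The main obstacle in that longer approach is the coordinate bookkeeping for an adversary who attacks up to $t$ edges per round and may change the attack set across rounds, so that the single-round minimum distance threshold is $2t+1$ rather than the value $3$ used in Example~\ref{cne}; the short argument via Remark~\ref{famR} sidesteps this bookkeeping entirely and is what I would ultimately present.
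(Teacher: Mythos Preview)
Your proposal is correct and essentially matches the paper's own proof, which is a one-line reference back to Proposition~\ref{umd} and Remark~\ref{famR}. The only presentational difference is that your primary upper-bound argument is the monotonicity observation (every Scenario~\ref{scenario1} adversary is a Scenario~\ref{scenario2} adversary, so unambiguous codes in~\ref{scenario2} are unambiguous in~\ref{scenario1}), whereas the paper phrases it as ``the earlier proofs used no Scenario~\ref{scenario1}-specific hypothesis on the adversary''; these are two sides of the same coin, and your alternative paragraph explicitly offers the paper's route as well.
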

\begin{proof} The result follows similarly as in~Proposition\ref{umd} and Remark~\ref{famR}, since we do not assume anything on the adversary except that it can attack up to $1$ edge. 
\end{proof}

These results shows that, in Scenario \ref{scenario2}, there is no gain of using these networks multiple times for communication under this adversarial model.

\section{Open Questions and Future Work}
\label{sec:future}

In this paper we have investigated the multishot capacity of networks with restricted adversaries focusing on the Diamond Network and the Mirrored Diamond Network as elementary building blocks of a general theory.
In future work, we plan to investigate the multishot capacity of arbitrary networks by establishing a multishot version of the double cut-set bound of in~\cite{beemer2023network}
and by computing the multishot capacities of all the five families introduced in the same paper.

\end{document}